\def\noheaderplainsetup{

\topmargin=0pt \headheight=0pt \headsep=0pt  \oddsidemargin=0pt \evensidemargin=0pt  \textheight=9.1truein \textwidth=6.5truein}   
\begin{document}

%     MISC.:
\newcommand{\lll}{\mbox{{\bf CL16}}}
\newcommand{\mmm}{\mbox{{\bf CL18}}}
\newcommand{\Bigmlc}{\mbox{{\Large $\wedge$}}}
\newcommand{\Bigmld}{\mbox{{\Large $\vee$}}}
\newcommand{\bigmlc}{\mbox{{\large $\wedge$}}}
\newcommand{\bigmld}{\mbox{{\large $\vee$}}}
\newcommand{\bigleftbrace}{\mbox{{\large $\{$}}}
\newcommand{\bigrightbrace}{\mbox{{\large $\}$}}}
\newcommand{\Bigleftbrace}{\mbox{{\Large $\{$}}}
\newcommand{\Bigrightbrace}{\mbox{{\Large $\}$}}}
\newcommand{\emptyrun}{\langle\rangle}
\newcommand{\legal}[2]{\mbox{\bf Lr}^{#1}_{#2}} %function telling what the legal positions are
\newcommand{\win}[2]{\mbox{\bf Wn}^{#1}_{#2}} %function telling who the winner is
\newcommand{\seq}[1]{\langle #1 \rangle}           % sequence: <...>

%     OPERATORS:

\newcommand{\ade}{\mbox{\large $\sqcup$}}      %additive existential quantifier
\newcommand{\ada}{\mbox{\large $\sqcap$}}      %additive universal quantifier
\newcommand{\gneg}{\neg}                  %game negation
\newcommand{\mli}{\rightarrow}                     %strong reduction
\newcommand{\mld}{\vee}    %multiplicative disjunction
\newcommand{\mlc}{\wedge}  %multiplicative conjunction
\newcommand{\add}{\hspace{0pt}\sqcup}           
\newcommand{\adc}{\hspace{0pt}\sqcap}
\newcommand{\st}{\mbox{\raisebox{-0.05cm}{$\circ$}\hspace{-0.13cm}\raisebox{0.16cm}{\tiny $\mid$}\hspace{2pt}}}
\newcommand{\cost}{\mbox{\raisebox{0.12cm}{$\circ$}\hspace{-0.13cm}\raisebox{0.02cm}{\tiny $\mid$}\hspace{2pt}}}

%   NUMERATED ITEMS and ENVIRONMENTS

\newtheorem{theoremm}{Theorem}[section]
\newtheorem{factt}[theoremm]{Fact}
\newtheorem{definitionn}[theoremm]{Definition}
\newtheorem{lemmaa}[theoremm]{Lemma}
\newtheorem{conventionn}[theoremm]{Convention}
\newtheorem{claimm}[theoremm]{Claim}
\newtheorem{corollaryy}[theoremm]{Corollary}
\newtheorem{examplee}[theoremm]{Example}
\newtheorem{remarkk}[theoremm]{Remark}

\newenvironment{definition}{\begin{definitionn} \em}{ \end{definitionn}}
\newenvironment{theorem}{\begin{theoremm}}{\end{theoremm}}
\newenvironment{lemma}{\begin{lemmaa}}{\end{lemmaa}}
\newenvironment{fact}{\begin{factt}}{\end{factt}}
\newenvironment{corollary}{\begin{corollaryy}}{\end{corollaryy}}
\newenvironment{claim}{\begin{claimm}}{\end{claimm}}
\newenvironment{convention}{\begin{conventionn} \em}{\end{conventionn}}
\newenvironment{proof}{ {\bf Proof.} }{\  $\Box$ \vspace{.1in} }
\newenvironment{example}{\begin{examplee} \em}{\end{examplee}}
\newenvironment{remark}{\begin{remarkk} \em}{\end{remarkk}}

\title{Elementary-base cirquent calculus I: Parallel and choice connectives}
\author{Giorgi Japaridze
  \\  
 \\ Department of Computing Sciences, Villanova University, USA\\
 Email: giorgi.japaridze@villanova.edu\\
 URL: http://www.csc.villanova.edu/$^\sim$japaridz/
}
\date{}
\maketitle

\begin{abstract} {\em Cirquent calculus} is a proof system manipulating circuit-style constructs rather than formulas. 
Using it, this article constructs a sound and complete axiomatization $\lll$ of the propositional fragment of {\em computability logic} (the game-semantically conceived logic of computational problems) whose logical vocabulary consists of negation and parallel and choice connectives, and whose atoms represent elementary, i.e. moveless, games. 
\end{abstract}

\noindent {\em MSC}: primary: 03B47; secondary: 03B70; 03F03; 03F20; 68T15. 

\  

\noindent {\em Keywords}: Proof theory; Cirquent calculus; Resource semantics; Deep inference; Computability logic

\section{Introduction}\label{intr}
%\marginpar{intr}

{\em Computability logic}, or {\em CoL} for short, is a long-term project for developing a logic capable of acting as a comprehensive formal theory of computability in the same sense as classical logic is a formal theory of truth (see \cite{Japfin} for a survey). The approach starts by asking what kinds of mathematical objects ``computational problems''  are in their full generality, and finds that they can be most adequately understood as games played by a machine against its environment, with computability meaning existence of an (algorithmic) winning strategy for the machine. As its next step, CoL tries to identify a collection of the most natural, meaningful and potentially useful operations on games. These operations then form the connectives, quantifiers and other constructs of the logical vocabulary of CoL. Validity of a formula is understood as being ``always computable'', i.e. computable in virtue of the meanings of its logical operators regardless of how the non-logical atoms are interpreted. The final and most challenging step in developing CoL is finding sound and complete axiomatizations for ever more expressive fragments of this semantically construed logic. The present contribution adds one more brick to this edifice under construction.   

Among the main connectives of the language of CoL are {\em negation} (``{\em not}'') $\neg$, {\em parallel conjunction} (``{\em pand}'') $\mlc$, {\em parallel disjunction} (``{\em por}'') $\mld$, {\em choice conjunction} (``{\em chand}'') $\adc$, and {\em choice disjunction} (``{\em chor}'') $\add$. Where $G,H$ are games, the game-semantical meanings of the above connectives can be briefly characterized as follows. The game $\gneg G$ is nothing but $G$ with the roles  of the two players interchanged. $G\mlc H$ is a game playing which means playing $G$ and $H$ in parallel, where the machine wins if it wins in both components. $G\mld H$ differs from $G\mlc H$ only in that here winning in just one of the components is sufficient. $G\adc H$ is the game where, at the beginning, the environment chooses one of the two components, after which the game continues according to the rules of the chosen component. $G\add H$ is similar, only here it is the machine who makes an initial left-or-right choice. Game operations with similar intuitive characterizations have been studied by Lorenzen \cite{Lor61}, Hintikka \cite{Hintikka73} and Blass \cite{Bla72,Bla92} in their dialogue/game semantics, with Blass \cite{Bla92} being the first to systematically differentiate between the parallel and choice sorts of operations and pointing out their resemblance with the multiplicative ($\mlc,\mld$) and additive ($\adc,\add$) connectives of Girard's \cite{Gir87} linear logic. Many other operators of CoL have no known analogs in the literature. CoL also has two sorts of atoms: {\em general} atoms stranding for any games, and {\em elementary} atoms standing for propositions. The latter are understood as games with no moves, automatically won by the machine when true and lost when false. The fragments of CoL with only general atoms \cite{BauerLMCS, Cirq, Propint, Japfour, separating, taming1, taming2, Ver, qu, Xure1, Xure2} are called {\em general-base}, the fragments with only elementary atoms \cite{Japtocl1,Japtcs,lbcs,cl12} are called {\em elementary-base}, and the fragments where both sorts of atoms are present \cite{Japtocl2, Japtcs2, Japseq, Japtoggling, XuIGPL} are called {\em mixed-base}.  

All attempts to axiomatize the (whatever-base) full $\{\neg,\mlc,\mld,\adc,\add\}$-fragment  of CoL within the framework of traditional proof calculi had failed, and it was conjectured \cite{Bla92,Cirq} that such an axiomatization was impossible to achieve in principle even for the $\{\neg,\mlc,\mld\}$-subfragment. The recent work \cite{Anupam} by Das and Strassburger has positively verified this conjecture.   As a way to break the ice, \cite{Cirq} introduced the new sort of a proof calculus called {\em cirquent calculus}, in which a sound and complete axiomatization of the general-base $\{\neg,\mlc,\mld\}$-fragment of CoL was constructed; this result was later lifted to the mixed-base level in \cite{XuIGPL}.    Rather than being limited to tree-like objects such as formulas, sequents, hypersequents \cite{Avron} or deep-inference structures \cite{Gug07}, cirquent calculus deals with circuit-style constructs dubbed {\em cirquents}. Cirquents come in a variety of forms and sometimes, as in the present work or in \cite{XuIf,XuLast}, they are written textually rather than graphically, but their essence and main distinguishing feature remains the same: these are syntactic constructs explicitly allowing {\em sharing} of components between different subcomponents. Ordinary formulas of CoL are nothing but special cases of cirquents --- they are degenerate cirquents where nothing is shared. 

Sharing, itself, also takes different forms, such as two $\mld$-gates sharing a child, or two $\add$-gates sharing the left-or-right choice associated with them without otherwise sharing descendants.  Most cirquent calculus systems studied so far \cite{BauerLMCS, Cirq, Japdeep, taming1, taming2, XuIGPL} only incorporate the first sort of sharing. The idea of the second sort of sharing, dubbed {\em clustering}, was introduced and motivated in \cite{lmcs}. Among the potential benefits of it outlined in \cite{lmcs} was offering new perspectives on independence-free logic \cite{HS97}. Later work by Wenyan Xu \cite{XuIf,XuLast} made a significant progress towards materializing such a potential. The present work materializes another benefit offered by clustering: it constructs a sound and complete cirquent calculus axiomatization $\lll$ of the full elementary-base $\{\neg,\mlc,\mld,\adc,\add\}$-fragment of CoL.   No axiomatizations of any $\adc,\add$-containing fragments of CoL had been known so far (other than the brute-force  constructions of \cite{Japtocl1,Japtocl2,Japtcs,Japtcs2,Japseq,Japtoggling}, with their deduction mechanisms more resembling games than logical calculi). Generalizing from formulas to cirquents with clustering thus offers not only greater expressiveness, but also makes the otherwise unaxiomatizable CoL or certain fragments of it  amenable to being tamed as logical calculi.

\section{Games and strategies}\label{sgames}
 %\marginpar{sgames}
As  noted, CoL understands computational problems as games played between two players, called  the {\em machine} and   the  {\em environment}. The symbolic names for these players are $\top$ and $\bot$, respectively. $\top$ is a deterministic mechanical device only capable of following algorithmic strategies, whereas there are no restrictions on the behavior of $\bot$.  Our sympathies are with $\top$, 
and by just saying ``won'' or ``lost'' without specifying a player, we always mean won or lost by $\top$. $\wp$ is always  a variable ranging over $\{\top,\bot\}$. 
\(\gneg \wp\) means $\wp$'s adversary, i.e. the player that is not $\wp$.

A {\bf move} is a finite string over the standard keyboard alphabet. 
A {\bf labeled move} is a move prefixed with $\top$ or $\bot$, with such a prefix ({\bf label}) indicating which player has made the move. 
A {\bf run} is a (finite or infinite) sequence of labeled moves, and a 
{\bf position} is a finite run.
Runs will be often delimited by ``$\langle$'' and ``$\rangle$'', with $\emptyrun$ thus denoting the {\bf empty run}.

\begin{definition}\label{game}
%\marginpar{game}
 A {\bf game}\footnote{In CoL, the proper name of the concept defined here is ``constant game'', with the word ``game'' reserved for a more general concept; however, since constant games are the only kinds of games we care about in the present paper, we omit the word ``constant'' and just say ``game''.}  is a pair $A=(\legal{A}{},\win{A}{})$, where:\vspace{10pt}

1. $\legal{A}{}$ is a set of runs satisfying the condition that a finite or infinite run is in $\legal{A}{}$ iff all of its nonempty finite --- not necessarily proper --- initial
segments are in $\legal{A}{}$ (notice that this implies $\emptyrun\in\legal{A}{}$). The elements of $\legal{A}{}$ are
said to be {\bf legal runs} of $A$, and all other runs are said to be {\bf illegal}. We say that $\alpha$ is a {\bf legal move} for a player $\wp$ in a position $\Phi$ of $A$ iff $\seq{\Phi,\wp\alpha}\in\legal{A}{}$; otherwise 
$\alpha$ is an {\bf illegal move}. When the last move of the shortest illegal initial segment of $\Gamma$  is $\wp$-labeled, we say that $\Gamma$ is a {\bf $\wp$-illegal run} of $A$; {\bf $\wp$-legal} means ```not $\wp$-illegal''. \vspace{5pt} 

2. $\win{A}{}$ is a function that sends every run $\Gamma$ to one of the players $\top$ or $\bot$, satisfying the condition that if $\Gamma$ is a $\wp$-illegal run of $A$, then $\win{A}{}\seq{\Gamma}=\gneg\wp$.\footnote{We write $\win{A}{}\seq{\Gamma}$ for $\win{A}{}(\Gamma)$.} When $\win{A}{}\seq{\Gamma}=\wp$, we say that $\Gamma$ is a {\bf $\wp$-won} (or {\bf won by $\wp$}) run of $A$; otherwise $\Gamma$ is {\bf lost by $\wp$}. Thus, an illegal run is always lost by the player who has made the first illegal move in it.  
\end{definition}

It is clear from the above definition that, when defining a particular game $A$, it would be sufficient to specify what {\em positions}  (finite runs) are legal, and what {\em legal runs} are won. Such a definition will then uniquely extend to all --- including infinite and illegal --- runs. We will implicitly rely on this observation in the sequel. 

A game is said to be {\bf elementary} iff it has no legal runs other than the (always legal) empty run $\emptyrun$. That is, an elementary game is a ``game'' without any (legal) moves,  automatically won or lost. There are exactly two  such games, for which we use the same symbols $\top$ and $\bot$ as for the two players: the game $\top$ automatically won by player $\top$, and the game $\bot$ automatically won by player $\bot$.\footnote{Precisely, we have $\win{\top}{}\emptyrun=\top$ and $\win{\bot}{}\emptyrun=\bot$.} Computability logic is a conservative extension of classical logic, understanding classical propositions as elementary games. And, just like classical logic, it sees no difference between any two true propositions such as ``$0= 0$'' and ``{\em Snow is white}'', and identifies them with the elementary game $\top$; similarly, it treats false propositions such as ``$0= 1$'' or ``{\em Snow is black}'' as the elementary game $\bot$. 

An {\bf HPM} (``Hard-Play Machine'') is a Turing machine with the additional capability of making moves. The adversary can also move at any time, with such moves being the only nondeterministic events from the machine's perspective. Along with the ordinary  read/write {\em work tape},\footnote{In computational-complexity-sensitive treatments, an HPM is allowed to have any (fixed) number of work tapes.} the machine also has an additional  tape called  the  {\em run tape}. The latter, at any time,  spells the ``current position'' of the play. The role of this tape is to make the interaction history fully visible to the machine.  It is read-only, and its content is automatically updated every time either player makes a move.

In these terms,  a  {\bf solution} ($\top$'s winning strategy) for a given  game $A$ is understood as an HPM $\cal M$ such that,  no matter how the environment acts during its interaction with $\cal M$ (what moves it makes and when), the run incrementally spelled on the run tape is a $\top$-won run of $A$. When this is the case, we write ${\cal M}\models A$ and say that ${\cal M}$ {\bf wins}, or {\bf solves}, $A$, and that $A$ is a {\bf computable} game.   

There is no need to define $\bot$'s strategies, because all possible behaviors by $\bot$ are accounted for by the different possible nondeterministic updates of the run tape of an HPM. 

In the above outline, we described HPMs in a relaxed fashion, without being specific about technical details such as, say, how, exactly, moves are made by the machine, how many moves either player can make at once, what happens if both players attempt to move ``simultaneously'', etc. As it turns out, all reasonable design choices yield the same class of winnable games as long as we consider a certain natural subclass of games called {\em static}. 
Intuitively, these are games where the relative speeds of the players are irrelevant because, as Blass has once put it, ``it never hurts a player to postpone making moves''. Below comes a formal definition of this concept.

For either player $\wp$, we say that a run $\Upsilon$ is a {\bf $\wp$-delay} of a run $\Gamma$ iff:\vspace{-5pt}
\begin{itemize}
\item for both players $\wp'\in\{\top,\bot\}$, the subsequence of $\wp'$-labeled moves of $\Upsilon$ is the same as that of $\Gamma$, and
\item for any $n,k\geq 1$, if the $n$th $\wp$-labeled move is made later than (is to the right of) the $k$th $\gneg\wp$-labeled move in $\Gamma$, then so is it in $\Upsilon$.\vspace{-5pt}
\end{itemize}
\noindent The above conditions mean that in  $\Upsilon$  each player has made the same sequence of moves as in $\Gamma$, only, in $\Upsilon$, $\wp$ might have been acting with some delay.

Now, we say that a game  $A$ is {\bf static} iff, whenever a run $\Upsilon$ is a $\wp$-delay of 
a run $\Gamma$, we have:\vspace{-2pt}
\begin{itemize}
\item if $\Gamma$ is a $\wp$-legal run of $A$, then so is $\Upsilon$;
\item if $\Gamma$ is a $\wp$-won run of $A$, then so is $\Upsilon$.\vspace{-2pt}
\end{itemize}

All games that we shall see in this paper are static. In fact, they  are not merely static, but belong to a special subclass of static games called ``enumeration games'',  where even the order in which the players make their moves is irrelevant, and thus runs can be seen as multisets rather than sequences of labeled moves. Precisely, an {\bf enumeration game} is a game $A$ such that, for any run $\Gamma$ and any permutation $\Delta$ of $\Gamma$, $\Gamma$ is a legal (resp. won) run of $A$ iff so is $\Delta$. 

Dealing only with static games, which makes timing technicalities fully irrelevant, allows us to describe and analyze strategies (HPMs) in a relaxed fashion. For instance, imagine HPM $\cal N$ works by simulating and mimicking the work and actions of another HPM $\cal M$ in the scenario where $\cal M$'s imaginary adversary acts in the same way as $\cal N$'s own adversary. Due to the simulation overhead, $\cal N$ will generally be much slower than $\cal M$ in responding to its adversary's moves.  Yet, we may safely assume/pretend that the speeds of the two machines do not differ   and thus they will be generating identical runs.  This is ``even more so'' when we deal with enumeration games. In what follows we will often implicitly rely on this observation. 

%Throughout this paper, we use the term ``{\bf computational problem}", or simply ``{\bf problem}", is a synonym of ``static game''.

%A simplest example of a non-static game would be the game where all moves are legal, and which is won by the player who moves first. The chances of a player to succeed only depend on its relative speed, that is. Such a game hardly represents any meaningful computational problem.

\section{Syntax}\label{Syntax}
%\marginpar{mul}

We fix an infinite list 
%$l_0,l_1,l_2,\ldots$ 
of syntactic objects called {\bf elementary game letters}, for which we will be using 
$p,q,r$ as metavariables. A {\bf positive} (resp. {\bf negative}) {\bf literal} is the expression $p$ (resp. $\neg p$), where $p$ is an elementary game letter. Here $p$ is said to be the {\bf type} of the literal.

We further fix two pairwise disjoint infinite sets $\mathbb{C}(\add)$ and $\mathbb{C}(\adc)$ of decimal numerals. The elements of $\mathbb{C}(\add)\cup\mathbb{C}(\adc)$ are said to be {\bf clusters}. A cluster $c$ is said to be {\bf disjunctive} if $c\in \mathbb{C}(\add)$, and {\bf conjunctive} if $c\in\mathbb{C}(\adc)$.  

 The symbol $\mld$ (resp. $\mlc$) is said to be {\bf parallel disjunction} (resp. {\bf parallel conjunction}).   A {\bf choice disjunction} (resp. {\bf choice conjunction}) is a pair $\add^c$ (resp. $\adc^c$), where $c$ is a disjunctive (resp. conjunctive) cluster.  A common name for disjunctions and conjunctions of either sort is ``{\bf connective}'', and the corresponding symbol $\mld,\mlc,\add$ or $\adc$ is said to be the {\bf type} of the connective. Given a choice connective $\add^c$ or $\adc^c$, $c$  is said to be its {\bf cluster}; in this case we may as well say that the connective {\bf belongs to} --- or {\bf is in} --- cluster $c$. 

\begin{definition}\label{basecir}
%\marginpar{basecir}
A {\bf cirquent} is defined inductively as follows: 
\begin{itemize}
\item $\top$ and $\bot$ are cirquents. 
\item Each literal is a cirquent. 
\item If $A$ and $B$ are cirquents, then $(A)\mld(B)$ is a cirquent.
\item If $A$ and $B$ are cirquents, then $(A)\mlc(B)$ is a cirquent. 
\item If $A$ and $B$ are cirquents and $c$ is a conjunctive cluster,  then $(A)\adc^c(B)$ is a cirquent.
\item If $A$ and $B$ are cirquents and $c$ is a disjunctive cluster,  then  $(A)\add^c(B)$ is a cirquent. 
\end{itemize}   
\end{definition}

By a {\bf cluster of}  a cirquent $C$ we shall mean the cluster $c$ of some choice connective occurring in $C$. In such a case we may as well say that cluster $c$ {\bf occurs} in $C$.

When writing cirquents, parentheses will usually be omitted if this causes no ambiguity. When doing so, it is our convention that choice connectives take precedence over parallel connectives.  So, for instance, $A\adc^c B\mld C$  
means $(A\adc^c B)\mld C$ rather than $A\adc^c (B\mld C)$. 

Sometimes we may write an expression such as $A_1\mld \ldots\mld A_n$, where $n$ is a (possibly unspecified)  natural number with $n\geq 2$. This is to be understood as any (unspecified) order-respecting $\mld$-combination of the cirquents $A_1,\ldots,A_n$. ``Order-respecting'' in the sense that $A_1$ is the leftmost item of the combination, then comes $A_2$, then $A_3$, etc.   Similarly for 
$A_1\mlc\ldots\mlc A_n$.  So, for instance, both $(A\mlc B)\mlc C$ and $A\mlc(B\mlc C)$ --- and no other cirquent --- can be written as $A\mlc B\mlc C$. 

Officially, as we see, $\neg$ ({\bf negation}) is only allowed to be applied to elementary game letters. Shall we write $\neg E$ where $E$ is not an elementary game letter, it is to be understood as an abbreviation defined by: $\neg\neg A=A$; $\neg(A\mlc B)=\neg A\mld\neg B$; $\neg(A\mld B)=\neg A\mlc \neg B$; $\neg(A\adc^cB)=\neg A\add^c\neg B$; $\neg(A\add^c B)=\neg A\adc^c\neg B$. Similarly, $A\mli B$ is an abbreviation of $(\neg A)\mld B$. When writing cirquents, parentheses will usually be omitted if this causes no ambiguity. When doing so, it is our convention that $\neg$ has the highest precedence, then comes $\mli$, then come the choice connectives, and finally the parallel connectives. So, for instance, $\neg A\mld B\mli C\mlc D\adc^cE$ means $((\neg (A))\mld (B))\mli ((C)\mlc ((D)\adc^c(E)))$, i.e., 
$((A)\mlc (\neg(B)))\mld ((C)\mlc ((D)\adc^c(E)))$.

We define the {\bf root} of a cirquent $C$ to be $C$ itself if $C$ is $\top$, $\bot$ or a literal, and $\mld$ (resp. $\mlc$, resp. $\add^c$, resp. $\adc^c$) if $C$ is of the form $A\mld B$ (resp. $A\mlc B$, resp. $A\add^c B$, resp. $A\adc^c B$). When $r$ is the root of $C$, we say that $C$ is {\bf $r$-rooted}. 

\section{Semantics} 

We define {\em LegRuns} as the set of all runs satisfying 
the following conditions:\vspace{5pt}

1. Every move of $\Gamma$ is the string $c.0$ or $c.1$, where $c$ is a cluster. 

2. Whenever $\Gamma$ contains a move $c.i$ where $c$ is a disjunctive cluster, the move is $\top$-labeled. 

3.  Whenever $\Gamma$ contains a move $c.i$ where $c$ is a conjunctive cluster,  the move is $\bot$-labeled. 

4.  For any cluster $c$,  $\Gamma$ contains at most one move of the form $c.i$.\vspace{5pt}

The intuitive meaning of condition 1 is that every move signifies a choice ``left'' ($0$) or ``right'' ($1$) in some cluster; conditions 2 and 3 say that $\top$ moves (chooses) only in disjunctive clusters and $\bot$ only in conjunctive clusters; and condition 4 says that, in any given cluster, a choice can be made only once. 

Given a run $\Gamma\in LegRuns$, we say that a cirquent of the form $A\add^c B$ or $A\adc^c B$ is {\bf $\Gamma$-resolved} iff 
$\Gamma$ contains (exactly) one of the moves $c.0$ or $c.1$; then by the {\bf $\Gamma$-resolvent} of the cirquent we mean $A$ if such a move is  
$c.0$, and $B$ if it is $c.1$. ``$\Gamma$-unresolved'' means ``not $\Gamma$-resolved''. When $\Gamma$ is clear from the context, we may omit a reference to it and simply say ``resolved'', ``unresolved''  or ``resolvent''. 

An {\bf interpretation} is a function $^*$ which assigns to each elementary game letter $p$ an element $p^*$ of $\{\top, \bot\}$. Intuitively, such a function tells us whether $p$, as a proposition, is true or false. 

\begin{definition} \label{semantics}
%\marginpar{semantics} 
Each  cirquent $C$ and interpretation $^*$ induces a unique game $C^*$, which we may refer to as ``$C$ {\bf under} the interpretation $^*$''. The 
set $\legal{C^*}{}$ of legal runs of such a game is nothing a but {\em LegRuns}. Since  $\legal{C^*}{}$ does not depend on $C$ or $^*$, subsequently we shall simply say ``legal run'' rather than  ``legal run of $C^*$''. 
The $\win{C^*}{}$ component of the game $C^*$ is defined by stipulating that a legal run  $\Gamma$ is a won (by the machine) run of $C$ iff one of the following conditions is satisfied: 

1.  $C$ is $\top$. 

2. $C$ is a positive (resp. negative) literal and, where $p$ is the type of that literal, $p^*=\top$ (resp. $p^*= \bot$).

3. $C$ is $A_0\mld A_1$ (resp. $A_0\mlc A_1$) and, for at least one (resp. both) $i\in\{0,1\}$,  $\Gamma$ is a won run of $A_i$.

4.  $C$ is $A_0\add^c A_1$, it is resolved and, where $A_i$ is the resolvent, $\Gamma$ is a won run of $A_i$. 

5. $C$ is $A_0\adc^c A_1$ and either it is unresolved, or else, where $A_i$ is the resolvent, $\Gamma$ is a won run of $A_i$. 
\end{definition}

\begin{definition} \label{krisha}
%\marginpar{krisha}
 Consider a cirquent $C$. 

1. For an interpretation $^*$, a {\bf solution} of $C$ under $^*$, or simply a solution of $C^*$, is an HPM $\cal H$ such that $\cal H\models C^*$. We say that $C$ is {\bf computable} under $^*$, or simply that $C^*$ is computable, iff $C^*$ has a solution. 

2. A {\bf logical} (or uniform) {\bf solution} of $C$ is an HPM $\cal H$ such that, for any interpretation $^*$, $\cal H$ is a solution of $C^*$. We say that $C$ is {\bf valid} iff it has a logical solution.\footnote{In CoL, this sort of validity is called {\bf logical} (or {\bf uniform}) {\bf validity}. There is also another natural sort of validity, called {\bf nonlogical} (or {\bf multiform}) {\bf validity}. Namely, a cirquent (or formula) $C$ is multiformly valid iff, for any interpretation $^*$, 
%(usually understood in a somewhat more relaxed sense than in this paper) 
$C^*$ is computable. Nonlogical validity will not be considered in this paper.}
%For simplicity, we are not considering multiform validity in this work because there is a more or less standard way of showing that, whatever fragment of CoL we are talking about, the two sorts of validity are extensionally equivalent.} 
\end{definition}

\begin{remark}\label{cl1}
%\marginpar{cl1}
The cirquents in the present sense can be understood as generalizations of the formulas of system {\bf CL1}  of CoL constructed in \cite{Japtocl1}. Syntactically, the formulas differ from cirquents only in that no clusters are attached to $\add,\adc$. Each formula $F$ can be seen as a cirquent $C$ where no two different occurrences of a choice connective belong to the same cluster, i.e., as a cirquent with no sharing of choices associated with $\add,\adc$. More specifically, $C$ is a cirquent obtained from $F$ via superscripting each occurrence of $\add$ by a unique disjunctive cluster and each occurrence of $\adc$ by a unique conjunctive cluster. Let us call such a $C$ a {\em cirquentization} of $F$. We claim without a proof that, given a formula $F$ and a cirquentization $C$ of it, the two are semantically equivalent. Namely, any HPM $\cal F$ can be transformed into an HPM $\cal C$ --- and vice versa --- so that, for any interpretation $^*$, we have ${\cal F}\models F^*$ iff ${\cal C}\models C^*$ (with $F^*$ understood as in \cite{Japtocl1}). Consequently, $F$ is valid iff $C$ is so. 
\end{remark}

\section{Axiomatics}

By a {\bf rule of inference} we mean a set $\cal R$ of pairs $\vec{A}\leadsto B$, called {\bf applications} of $\cal R$,  where $\vec{A}$ 
is a tuple consisting of one or two cirquents, called the {\bf premise(s)}, and $B$ is a cirquent, called the {\bf conclusion}. When $\vec{A}\leadsto B$  
is in $\cal R$, we say that $B$ {\bf follows} from $\vec{A}$  by rule $\cal R$. 

 %Intuitively, a context can be seen as a structure which contains one occurrence of the hole {}. This can be formalised by the grammar S ::= {}|( S,R)|[S,R] where Rstands for a structure. Contexts will be denoted by S, T, . . . followed by the hole {}. A structure R can be plugged into a context S {} by replacing the hole {} with the structure R. The structure one obtains in this way is denoted by S{R}. 

In this section and later we will be using the notation $X[E_1,\ldots,E_n]$ to stand for a cirquent (intuitively  ``of {\bf structure} $X$'') together with some fixed subcirquents $E_1,\ldots,E_n$. Then, if we later write $X[F_1,\ldots,F_n]$ in the same context, it should be understood as the result of replacing, in $X[E_1,\ldots,E_n]$, all occurrences of $E_1,\ldots,E_n$ by $F_1,\ldots,F_n$, respectively. When this notation is used in the formulation of a rule of inference, our convention is that the context is always set by the conclusion. So, for instance, if we have a (sub)expression $X[E]$ in the conclusion and $X[F]$ in a premise, then $X[F]$ is the result of replacing all occurrences of $E$ by $F$ in $X[E]$ rather than vice versa. 

%We will be schematically representing all single-premise rules of inference in the form \[X[B_1,\ldots,B_k]\ \leadsto\ X[A_1,\ldots,A_k],\] where $X[A_1,\ldots,A_k]$ is the conclusion together with subcirquents $A_1,\ldots,A_k$, and $X[B_1,\ldots,B_k]$ is the cirquent resulting from $X[A_1,\ldots,A_k]$ by replacing all occurrences of the subcirquents $A_1,\ldots,A_k$ with $B_1,\ldots,B_k$, respectively.  

Below is a full list of the rules of inference of our system $\lll$.  The first seven rules come in two versions, between which we shall later differentiate by suffixing the name of the rule with ``(a)'' for the first version and  ``(b)''  for the second version. The last rule takes two premises, while all other rules take a single premise. The rules are written schematically, with $A,B,C,D$ (possibly with indices) acting as variables for subcirquents, $a,b,c$ as variables for clusters, and $X,Y$ as variables for ``structures''. The names of these rules have been chosen according to the conclusion-to-premises (rather than premises-to-conclusion) intuitions. 

%Each single-premise rule is written in the form \(X[E_1,\ldots,E_k]\ \leadsto\ X[F_1,\ldots,F_k]\) ($k=1$ in all cases but one). This should be understood as that the premise is the result of replacing, in the conclusion, all occurrences of  the subcirquents $F_1,\ldots,F_n$ by $E_1,\ldots,E_n$, respectively. 

\begin{description}
\item[Commutativity:]  $X[B\mld A]\leadsto X[A\mld B]$ \ and \ $X[B\mlc A]\leadsto X[A\mlc B]$. 
\item[Associativity:]  $X[A\mld(B\mld C)]\leadsto  X[(A\mld B)\mld C)]$ \ and \ $X[A\mlc(B\mlc C)]\leadsto  X[(A\mlc B)\mlc C)]$.
\item[Identity:]  $X[A]\leadsto X[A\mld \bot]$ and  $X[A]\leadsto  X[A\mlc \top]$.  
\item[Domination:]  $X[\top]\leadsto X[ A\mld \top]$ and $X[\bot]\leadsto  X[A\mlc \bot]$. 
\item[Choosing:]  $X[A_1,\ldots,A_n]\leadsto  X[A_1 \add^c B_1,\ldots,A_n \add^c B_n]$ \ and  \ $X[B_1,\ldots,B_n]\leadsto  X[A_1 \add^c B_1,\ldots,A_n \add^c B_n]$,  \ where $A_1\add^c B_1$, \ldots, $A_n \add^c B_n$ are all $\add^c$-rooted subcirquents of the conclusion. 
\item[Cleansing:] $X\bigl[Y[A]\adc^c C\bigr]\leadsto X\bigl[Y[A\adc^c B]\adc^c C\bigr]$ \ and \ $X\bigl[C \adc^c Y[B]\bigr]\leadsto X\bigl[C \adc^c Y[A\adc^c B]\bigr]$.
\item[Distribution:]     $X[(A\mld C)\mlc(B \mld C)]\leadsto  X[(A\mlc B)\mld C]$ \ and \  $X[(A\mld C)\adc^c (B \mld C)]\leadsto  X[(A\adc^c B)\mld C]$.
\item[Trivialization:] $X[\top]\leadsto  X[\neg p\mld p]$,  where $p$ is an elementary letter.
\item[Quadrilemma:] $X\Bigl[\Bigl(A\mlc (C \adc^b D)\bigr) \adc^a \bigl( B\mlc (C \adc^b D)\bigr)\Bigr)\adc^c \Bigl(\bigl(((A\adc^a B)\mlc C\bigr)
      \adc^b\bigl( (A\adc^a B)\mlc D\bigr)\Bigr)\Bigr] \leadsto  X[(A\adc^a B)\mlc (C \adc^b D)]$,  where $c$ does not occur in the conclusion. 
\item[Splitting:] $A,B\leadsto  A\adc^c B$,  where neither $A$ nor $B$ has an occurrence of $c$.
\end{description}

A {\bf proof} of a cirquent $A$ is a sequence $C_1,\ldots,C_n$ ($n\geq 1$) of cirquents such that $C_1 = \top, C_n = A$ and, for each $i\in \{2,\ldots,n\}$, $C_i$ follows by one of the rules of inference from some earlier cirquents in the sequence. Thus, $\top$ is the only axiom of $\lll$. 

\begin{example}\label{exm1}
Below is a proof of $p\mlc q\add^c r \mli (p\mlc q)\add^d (p\mlc r)$, i.e. of $(\neg p\mld \neg q\adc^c \neg r )\mld (p\mlc q)\add^d (p\mlc r)$. For brevity, consecutive applications of Commutativity or Associativity have been combined together in single steps.  

1. $\top$ \hspace{15pt} Axiom 

2. $\top  \mlc \top $ \hspace{15pt} Identity(b): 1

3. $ (\neg q\mld \top)\mlc ( \neg p\mld \top)$ \hspace{15pt} Domination(a): 2 (twice)

4. $ \bigl(\neg q\mld (\neg p\mld p)\bigr)\mlc \bigl( \neg p\mld (\neg q\mld q)\bigr)$ \hspace{15pt} Trivialization: 3 (twice)

5. $ \bigl((\neg q\mld \neg p)\mld p\bigr)\mlc \bigl( (\neg p\mld \neg q)\mld q\bigr)$ \hspace{15pt} Associativity(a): 4 (twice)

6. $ \bigl(p\mld (\neg q\mld \neg p)\bigr)\mlc \bigl(q\mld (\neg q\mld \neg p)\bigr)$ \hspace{15pt} Commutativity(a): 5 (three times)

7. $ (p\mlc q)\mld (\neg q\mld \neg p)$ \hspace{15pt} Distribution(a): 6

8. $(\neg q\mld \neg p) \mld (p\mlc q)$ \hspace{15pt} Commutativity: 7

9. $(\neg q\mld \neg p) \mld (p\mlc q)\add^d (p\mlc r)$ \hspace{15pt} Choosing(a): 8

10. $(\neg r\mld \top)\mlc (\neg p\mld \top) $ \hspace{15pt} Domination(a): 2 (twice)

11. $\bigl(\neg r\mld (\neg p \mld p)\bigr)\mlc \bigl(\neg p\mld (\neg r \mld r )\bigr) $ \hspace{15pt} Trivialization: 10 (twice)

12. $\bigl((\neg r\mld \neg p) \mld p\bigr)\mlc (\bigl(\neg p\mld \neg r) \mld r \bigr) $ Associativity(a): 11 (twice)

13. $\bigl(p\mld (\neg r\mld \neg p) \bigr)\mlc \bigl(r\mld (\neg p\mld \neg r)\bigr ) $ \hspace{15pt} Commutativity(a): 12 (twice)

14. $(p\mlc r)\mld (\neg r\mld \neg p)$ \hspace{15pt} Distribution(a): 13

15. $(\neg r\mld \neg p) \mld (p\mlc r)$ \hspace{15pt} Commutativity(a): 14

16. $(\neg r\mld \neg p) \mld (p\mlc q)\add^d (p\mlc r)$ \hspace{15pt} Choosing(b): 15

17. $\bigl((\neg q\mld \neg p) \mld (p\mlc q)\add^d (p\mlc r)\bigr)\adc^c \bigl((\neg r\mld \neg p) \mld (p\mlc q)\add^d (p\mlc r)\bigr)$ \hspace{15pt} Splitting: 9,16

18. $(\neg q\mld \neg p)\adc^c (\neg r\mld \neg p) \mld (p\mlc q)\add^d (p\mlc r)$ \hspace{15pt} Distribution(b): 17

19. $(\neg q\adc^c \neg r \mld\neg p )\mld (p\mlc q)\add^d (p\mlc r)$ \hspace{15pt} Distribution(b): 18

20. $(\neg p\mld \neg q\adc^c \neg r)\mld (p\mlc q)\add^d (p\mlc r)$ \hspace{15pt} Commutativity(a): 19
\end{example}

\section{The preservation lemma} 

\begin{lemma} \label{pres}
%\marginpar{pres}
 Consider an arbitrary interpretation $^*$. 

1. Each application of any of the rules of $\lll$ preserves computability under $^*$ in the premises-to-conclusion direction, i.e., if all premises are computable under $^*$, then so is the conclusion.   
 
2. Each application of any of the rules of $\lll$ other than Choosing also preserves computability under $^*$ in the conclusion-to-premises direction, i.e., if the conclusion is computable under $^*$, then so are all premises. 
\end{lemma}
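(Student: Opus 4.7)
The plan is a case analysis over the ten rules. A first observation that streamlines matters is that the set \emph{LegRuns} of legal runs depends on neither the cirquent nor the interpretation, so it suffices, for a given rule, to show that under every interpretation $^*$ and every legal run $\Gamma$, $\Gamma$ is a $\top$-won run of the premise iff it is a $\top$-won run of the conclusion. Whenever this purely semantic equivalence holds, \emph{any} HPM $\cal H$ is a solution of one iff of the other, and both parts of the lemma follow at once.

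This criterion handles the majority of the rules, in each case by unfolding the recursive clauses of Definition~\ref{semantics}. For Commutativity and Associativity, the winning clauses for $\mld$ and $\mlc$ are manifestly commutative and associative. For Identity, $A\mld\bot$ wins iff $A$ wins (as $\bot$ never wins), and $A\mlc\top$ wins iff $A$ wins. For Domination, both $\top$ and $A\mld\top$ are unconditionally won, and both $\bot$ and $A\mlc\bot$ unconditionally lost. For Cleansing, a move $c.i$ resolves the outer and the nested $\adc^c$ of the conclusion in lockstep to the same side: $c.0$ collapses the slot to $Y[A]$ (matching the premise), $c.1$ collapses it to $C$ (ditto), and the $c$-unresolved case is auto-won on both sides. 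Both Distribution rules reduce to the classical equivalence $(A\mld C)\mlc(B\mld C)\equiv(A\mlc B)\mld C$, with the $\adc^c$-variant additionally using the auto-won character of an unresolved $\adc^c$. Trivialization works because $\neg p \mld p$ is won under every $^*$, just like $\top$.

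The remaining three rules need genuine strategy constructions. For \textbf{Choosing} (premises-to-conclusion only), given a solution $\cal H$ of $X[A_1,\ldots,A_n]$, the sought solution of $X[A_1\add^c B_1,\ldots,A_n\add^c B_n]$ makes the single initial move $c.0$ --- legal because $c$ is disjunctive and thus reserved for $\top$ --- which simultaneously resolves every $\add^c$-occurrence to its left argument, reducing the cirquent to $X[A_1,\ldots,A_n]$; $\cal H$ then takes over. Version (b) uses $c.1$. For \textbf{Splitting}, the conclusion-to-premises direction is trivial (feed an initial $c.0$ or $c.1$ to the strategy and proceed). In the opposite direction, given solutions $\cal F$ and $\cal G$ of $A$ and $B$, the strategy for $A\adc^c B$ makes no moves of its own until the environment moves $c.0$ or $c.1$, at which point it launches $\cal F$ or $\cal G$ on the accumulated run; if the environment never moves $c$, the $\adc^c$-root is unresolved and thus auto-won. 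Deferring the machine's responses is legitimized by the games being static, indeed enumeration games.

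The main obstacle is \textbf{Quadrilemma}. Write the premise as $X[P_1\adc^c P_2]$, with $P_1,P_2$ standing for the two large subcirquents displayed in the rule. The conclusion-to-premises direction is still easy: a solution $\cal H$ of the conclusion $X[(A\adc^a B)\mlc(C\adc^b D)]$ also solves the premise, since regardless of the environment's choice of $c.0$ or $c.1$, the eventual resolutions of $a$ and $b$ lead in every case to a subcirquent inside $X$ that coincides verbatim with the one reached in the conclusion under the same $a,b$-moves; if $c$ is unresolved, $P_1\adc^c P_2$ is auto-won. The premises-to-conclusion direction is the crucial one. Given $\cal H$ solving the premise, the strategy for the conclusion forwards every environment move to $\cal H$ and additionally feeds $\cal H$ a \emph{simulated} $c$-move chosen reactively: $c.0$ the moment the environment first touches cluster $a$, $c.1$ the moment it first touches $b$ (and either one, or none, if it touches neither). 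A case check shows that in every branch --- both $a$ and $b$ resolved, only one resolved, or neither resolved --- the subcirquent reached inside $X$ in the simulated premise matches the one reached in the conclusion, with unresolved $\adc^a$ or $\adc^b$ collapsing the corresponding side to an auto-won factor on both sides. Hence $\cal H$'s winning play in the premise transports to a winning play in the conclusion. Staticness once more justifies the deferred feeding of the $c$-move.
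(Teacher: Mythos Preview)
Your argument is correct and largely parallels the paper's: for the seven ``structural'' rules (Commutativity through Trivialization) both you and the paper observe that the premise and conclusion define literally the same game --- same $\win{}{}$-component on every legal run --- so a solution of one is automatically a solution of the other; and for Splitting and Choosing you give the same explicit HPM constructions as the paper.

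The one place where you genuinely diverge is Quadrilemma. The paper simply groups it with the structural rules, asserting that here too ``$E$ and $F$ are identical as games''. Taken literally this is not true: for a run in which $\bot$ resolves $a$ and $b$ but never resolves $c$, the premise slot $P_1\adc^c P_2$ is automatically $\top$-won, whereas the conclusion slot $(A\adc^a B)\mlc(C\adc^b D)$ can be lost (e.g., when the $a$-resolvent is false under~$^*$). Hence an HPM that solves the premise by first waiting for a $c$-move need not solve the conclusion directly. Your reactive simulation --- feed the simulated $\cal H$ the move $c.0$ or $c.1$ according to which of $a,b$ the real environment touches first, and nothing if it touches neither --- is exactly what closes this gap, and your case analysis is sound. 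In the opposite direction your observation that the \emph{same} $\cal H$ works is also correct, since for every legal run a win in the conclusion entails a win in the premise. So on this rule your treatment is in fact more careful than the paper's.

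One small refinement: in your Quadrilemma construction, have $\cal N$ \emph{suppress} any real $c$-move the environment might make rather than ``forward every environment move'', so as not to collide with the simulated $c$-move; this is harmless because $c$ does not occur in the conclusion.
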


\begin{proof} Consider  an arbitrary interpretation  $^*$. Since $^*$ is going to be fixed throughout this proof, for readability  we agree to omit explicit references to it. So, for instance, where $E$ is a cirquent, we may write $E$ instead of $E^*$, or say ``\ldots solution of $E$'' instead of  ``\ldots solution of $E$ under $^*$''. Throughout this and some later proofs, when trying to show that a given machine $\cal H$ is a solution of a given game $G$, we implicitly rely on what is called the ``clean environment assumption''. According to it, $\cal H$'s environment never makes moves that are not legal moves of $G$. Assuming that this condition is satisfied is legitimate, because, if $\cal H$'s environment makes an illegal move, $\cal H$ automatically wins. 

If $E\leadsto F$ is an application of any of the rules other than Splitting or Choosing, it is not hard to see that  $E$  and $F$ are identical as games. So, a solution of $E$ is automatically a solution of $F$, and vice versa. Let us just look at Cleansing(a) as an illustrative example. Consider an application $X\bigl[Y[A]\adc^c C\bigr]\leadsto X\bigl[Y[A\adc^c B]\adc^c C\bigr]$ of this rule. Let $\Gamma$ be an arbitrary legal run. We want to show that $\Gamma$ is a won run of $E$ iff it is a won run of $F$. If $c$ is unresolved in $\Gamma$, then the $Y[A\adc^c B]\adc^c C$ component of the conclusion will be won just like the $Y[A]\adc^c C$ component of the premise. Since the two cirquents only differ in that one has $Y[A\adc^c B]\adc^c C$ where the other has $Y[A]\adc^c C$, we find that $\Gamma$ is a won run of both games or neither. Now assume $c$ is resolved, i.e., $\Gamma$ contains the move $c.i$ ($i=0$ or $i=1$). If $i=1$, then $\Gamma$ is a won run of  $X[Y[A\adc^c B]\adc^c C]$  iff it is a won run of $X[C]$  iff it is a won run of $X[Y[A]\adc^c C]$.  And if $i=0$, then $\Gamma$ is a won run of 
$ X\bigl[Y[A\adc^c B]\adc^c C\bigr]$ iff it is a won run of $ X\bigl[Y[A\adc^c B]\bigr]$ iff it is a won run of $ X\bigl[Y[A]\bigr]$ iff it is a won run of $ X\bigl[Y[A\adc^c C]\bigr]$. Thus, in either case, the conclusion is won iff so is the premise.  

Consider an application $A,B \leadsto  A\adc^c B$ of {\em Splitting}.  
%Let $S_1$ (resp. $S_2$) be the set of all clusters that appear in $B$ (resp. $A$) but not in $A$ (resp. $B$).  

For the premises-to-conclusion direction, assume the premises are computable, namely, HPMs ${\cal M}_A$  and ${\cal M}_B$  are solutions of $A$  and $B$, respectively. Let $\cal N$ be an HPM which, at the beginning of the play, waits till the environment makes one of the moves $c.0$ or $c.1$.
 After that, where $\alpha_1,\ldots,\alpha_n$ are the moves made by the 
environment before the move $c.0$ (resp. $c.1$) was made, $\cal N$ starts simulating ${\cal M}_A$ (resp. ${\cal M}_B$), with 
$\bot\alpha_1,\ldots,\bot\alpha_n$ 
on the imaginary run tape of the latter at the very first clock cycle. Whenever $\cal N$ sees that the simulated machine ${\cal M}_A$ (resp. ${\cal M}_B$) made a move, $\cal N$ makes the same move; $\cal N$ also periodically checks its own run tape to see if the 
environment has made any new moves in the real play and, if yes, it appends those ($\bot$-prefixed) moves to the imaginary run tape of the simulated machine. In more relaxed and intuitive terms, what we just said about the actions of $\cal N$ after the environment has moved $c.0$ (resp. $c.1$) can be put as ``$\cal N$ plays exactly like ${\cal M}_A$ (resp. ${\cal M}_B$) would play in the scenario where, at the very start of the play, the environment made the moves $\alpha_1,\ldots,\alpha_n$''. Later, in similar situations, we shall usually describe and analyze HPMs in relaxed terms, without going into technical details of simulation and without even using the word ``simulation''. Since we  exclusively deal  with static games, this relaxed approach is safe and valid (see the end of Section \ref{sgames}). Anyway, it is not hard to see that our $\cal N$ is a solution of $A\adc^c B$. 

For the conclusion-to-premises direction, assume $\cal N$ is a solution of $A\adc^c B$. Let ${\cal M}_A$ (resp. ${\cal M}_B$) be an HPM which plays just like $\cal N$ would in the scenario where, at the very start of the play, $\cal N$'s adversary made the move $c.0$ (resp. $c.1$). Obviously  ${\cal M}_A$ and ${\cal M}_B$   are solutions of $A$ and $B$, respectively.

Consider an application $X[A_1,\ldots,A_n]\leadsto  X[A_1 \add^c B_1,\ldots,A_n \add^c B_n]$ of {\em Choosing(a)}, and assume $\cal M$ is a 
solution of the premise. Let ${\cal N}$ be an HPM which, at the beginning of the game, makes the move $c.0$, after which it plays exactly as 
$\cal M$ would. Obviously $\cal M$ 
is a solution of the conclusion. {\em Choosing(b)} will be handled in a similar way. 
%In either case, our lemma does not (wrongly) claim that this rule also preserves computability in the conclusion-to-premise direction. 
%Finally, consider an application  
%\[X\Bigl[\Bigl(A\mlc (C \adc^b D)\bigr) \adc^a \bigl( B\mlc (C \adc^b D)\bigr)\Bigr)\adc^c \Bigl(\bigl(((A\adc^a B)\mlc C\bigr)
%      \adc^b\bigl( (A\adc^a B)\mlc D\bigr)\Bigr)\Bigr] \leadsto  X[(A\adc^a B)\mlc (C \adc^b D)]\] of {\em Quadrilemma}. 
%For the premise-to-conclusion direction, assume $\cal M$ is a solution of the premise. We construct a solution $\cal N$ of the conclusion as follows. $\cal N$ starts playing exactly like $\cal M$ would play. This continues until (if and when) a position $\seq{\Phi,i.j}$ is reached where $i\in\{a,b\}$ and $j\in\{0.1\}$. 
%If $i=a$, then $\cal N$ continues the play just like $\cal M$ would in the position $\langle\Phi,\bot c.0,\bot a.j\rangle$. And if $i=b$,  then $\cal N$ continues like $\cal M$ would in the position $\langle\Phi,\bot c.1,\bot b.j\rangle$.
%For the conclusion-to-premise direction, assume $\cal N$ is a solution of the conclusion. With some analysis, the premise can be seen to be won by a machine $\cal M$ which plays exactly like $\cal N$, except that, if the move $c.0$ or $c.1$ is made by $\cal M$'s adversary, this move (unlike all other moves) should not be assumed to be also made by $\cal N$'s imaginary adversary.  
\end{proof}

The following is an immediate corollary of Lemma \ref{pres}:  

\begin{corollary}\label{prescor} 
%\marginpar{prescor}
1. Each application of any of the rules of $\lll$ preserves validity in the premise-to-conclusion direction, i.e., if all premises are valid, then so is the conclusion. 

2. Each application of any of the rules of $\lll$ other than Choosing also preserves validity in the conclusion-to-premise direction, i.e., if the conclusion is valid, then so are all premises.
\end{corollary}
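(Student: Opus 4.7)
My plan is to observe that the corollary is essentially a cosmetic upgrade of Lemma \ref{pres}: all the HPM constructions carried out in its proof are already uniform in the interpretation $^*$, so they transform logical solutions into logical solutions without any additional work. The whole task therefore reduces to reviewing each rule and checking that the construction the lemma gives does not actually mention $^*$ anywhere.

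First I would fix the premise-to-conclusion direction. For the rules other than Splitting and Choosing, the proof of Lemma \ref{pres} shows that premise and conclusion are the same game under any $^*$, so a single HPM $\cal H$ that solves the premise under every interpretation automatically solves the conclusion under every interpretation. For Splitting, given logical solutions ${\cal M}_A$ and ${\cal M}_B$ of $A$ and $B$, the HPM $\cal N$ constructed in the lemma waits for $c.0$ or $c.1$ and then mimics the appropriate one of ${\cal M}_A,{\cal M}_B$; this recipe uses $^*$ nowhere, so $\cal N$ solves $(A\adc^c B)^*$ for every $^*$, witnessing validity of the conclusion. For Choosing(a) and Choosing(b), the constructed $\cal N$ just prefixes a single move $c.0$ or $c.1$ to the behavior of the premise-solving HPM, again uniformly in $^*$.

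For the conclusion-to-premises direction (excluding Choosing), the same observation applies. For the rules where premise and conclusion coincide as games, the logical solution of the conclusion is already a logical solution of the premise. For Splitting, from a logical solution $\cal N$ of $A\adc^c B$ the lemma extracts ${\cal M}_A$ and ${\cal M}_B$ by hard-wiring the initial environment move $c.0$ or $c.1$; this hard-wiring is purely syntactic and works uniformly across all $^*$, so ${\cal M}_A$ and ${\cal M}_B$ are logical solutions of $A$ and $B$.

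I do not expect a real obstacle here. The only thing one has to be slightly careful about is to make sure that the transformations in Lemma \ref{pres} really do not secretly depend on $^*$ (for example, through a case analysis based on whether some $p^*$ is $\top$ or $\bot$); a quick scan of the proof confirms they do not. With this observation in hand, both clauses of the corollary follow by simply re-reading Lemma \ref{pres} with ``solution'' replaced by ``logical solution'' throughout.
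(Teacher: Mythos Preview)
Your proposal is correct and follows essentially the same approach as the paper: the paper simply declares the corollary ``immediate from Lemma \ref{pres}'', and your argument spells out precisely why---namely, that the HPM constructions used there are uniform in $^*$, which is the content the paper later records explicitly as Remark \ref{rfeb5}.
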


\begin{remark}\label{rfeb5} Lemma \ref{pres} and Corollary \ref{prescor} state the existence of certain solutions. A look back at our proof of those statements reveals that, in fact, this existence is constructive. Namely, in the case of clause (a) of Lemma \ref{pres}, for any given rule, there is a $^*$-independent effective procedure which extracts an HPM $\cal M$ from the premise(s), the conclusion and HPMs that purportedly solve the premises under $^*$; as long as these purported solutions are indeed solutions, $\cal M$ is a solution of the conclusion under $^*$. Similarly for clause (b). In the case of clause (a)  of Corollary \ref{prescor},  for any given rule, there is an effective procedure which extracts an HPM $\cal M$ from the premise(s), the conclusion and purported logical solutions  of the premises; as long as these purported logical solutions are indeed logical solutions, $\cal M$ is a logical solution of the conclusion. Similarly for clause (b). 
\end{remark}

\section{Soundness and completeness}
Below we use the standard notation $^na$ (``tower of $a$'s of height $n$'') for tertration, defined inductively by $^1 a= a$ and $^{n+1}a=a^{( ^na)}$.  So, for instance, $^3 5=5^{5^5}$.
\begin{definition}\label{rankdef}
%\marginpar{rankdef}
The {\bf rank} $\overline{C}$ of a cirquent $C$ is the number defined as follows:  

1. If $C$ is $\top$, $\bot$ or a literal, then $\overline{C}\ =\ 1$.

2. If $C$ is  $A\add^c B$ or $A\adc^c B$, then $\overline{C}\ =\ \overline{A}+\overline{B}$.

3. If $C$ is  $A\mlc B$, then $\overline{C}\ =\ 5^{\overline{A}+\overline{B}}$.

4. If $C$ is $A\mld B$, then  $\overline{C}\ = \ ^{\overline{A}+\overline{B}}5$.
\end{definition}

\begin{lemma}\label{monot}
%\marginpar{monot}
The rank function is {\bf monotone} in the following sense. Consider a cirquent $A$ with a subcirquent $B$. Assume $B'$ is a cirquent with $\overline{B'}< \overline{B}$, and $A'$ is the result of replacing an occurrence of $B$ by $B'$ in $A$. Then $\overline{A'}< \overline{A}$.
\end{lemma}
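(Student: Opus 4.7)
The plan is to prove this by structural induction on $A$ (equivalently, by induction on the depth at which the distinguished occurrence of $B$ sits inside $A$).

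For the base case, the occurrence of $B$ is $A$ itself, so $A' = B'$ and the claim $\overline{A'} < \overline{A}$ is just the hypothesis $\overline{B'} < \overline{B}$.

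For the inductive step, the occurrence of $B$ is proper, so by Definition \ref{basecir} the cirquent $A$ has the form $C \star D$ where $\star \in \{\mld, \mlc, \add^c, \adc^c\}$, and the distinguished occurrence of $B$ lies entirely inside $C$ or entirely inside $D$; say inside $C$ without loss of generality. Letting $C'$ be the result of the replacement inside $C$, we have $A' = C' \star D$ and, by the induction hypothesis, $\overline{C'} < \overline{C}$. The conclusion now reduces to verifying strict monotonicity of each of the four arithmetic operations used in Definition \ref{rankdef}: addition (for the $\add^c,\adc^c$ clauses), the map $x \mapsto 5^{x + \overline{D}}$ (for the $\mlc$ clause), and the map $x \mapsto {}^{x + \overline{D}}5$ (for the $\mld$ clause). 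All three maps are strictly increasing in $x$ over positive integers, and a trivial side induction on the structure of a cirquent confirms that every rank is a positive integer (so that the tetration tower has height at least $1$ and ordinary monotonicity of $5^{(\cdot)}$ applies). Plugging $\overline{C'} < \overline{C}$ into the appropriate map therefore yields $\overline{A'} < \overline{A}$ in every case.

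The only mildly delicate step is the $\mld$ case, since we need $^{m}5 < {}^{n}5$ whenever $1 \le m < n$; this follows by an obvious side induction using $^{k+1}5 = 5^{(^k 5)}$ and the fact that $^k 5 \ge 1$, so the whole argument is essentially bookkeeping rather than a genuine obstacle. I do not anticipate any harder step — the rank function was specifically designed so that each connective produces a strictly larger value than any of its immediate ingredients, which is exactly what is needed to push the inductive comparison through.
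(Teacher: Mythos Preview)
Your proof is correct and follows essentially the same approach as the paper, which simply appeals to the monotonicity of the functions $x+y$, $5^x$, and $^x5$; you have merely unpacked this one-line argument into an explicit structural induction with the base and inductive cases spelled out.
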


\begin{proof} This is so due to the monotonicity of the functions $x+y$, $5^x$ and $^x5$. \end{proof}
 
A {\bf surface occurrence} of a subcirquent or a connective in a given cirquent is an occurrence which is not in the scope of a choice connective. 
%A a {\bf surface subcirquent} of a given cirquent is a subcirquent all of whose occurrences are surface ones. 

\begin{definition}\label{pd}
%\marginpar{pd}
 We say that a cirquent $D$ is {\bf pure} iff the following conditions are satisfied: 

1. $D$ has no surface occurrences of $\bot$ unless $D$ itself is $\bot$. 

2. $D$ has no surface occurrence of $\mlc$ which is in the scope of $\mld$. 

3. $D$ has no surface occurrence of $\adc^c$  (whatever cluster $c$) which is in the scope of $\mld$.
 
%4. $D$ has no surface subcirquent of the form $(A\mld B)\mld C$. 

4. $D$ has no surface occurrence of the form $A_1\mld\ldots\mld A_n$ such that, for some elementary letter $p$, both 
$p$ and $\neg p$ are among $A_1,\ldots,A_n$. 

5. $D$ has no surface occurrences of $\top$ unless $D$ itself is $\top$. 

6. If $D$ is of the form $A_1\mlc\ldots\mlc A_n$ ($n\geq 2$), then at least one $A_i$ ($1\leq i\leq n$) is not of the form $B\adc^c C$.

%. $D$ has no surface subcirquent of the form $(A\mlc B)\mlc C$.

7. If $D$ is of the form $A\adc^c B$, then neither $A$ nor $B$ contains the cluster $c$. 
\end{definition}

Below we describe a procedure which takes a cirquent $D$ and applies to it a series of modifications. Each modification changes the value of $D$ so that the old value of $D$ follows from the new value by one of the single-premise rules (other than Choosing) of $\lll$. The procedure is divided into 7 stages, and the purpose of each stage $i\in\{1,\ldots,7\}$ is to make $D$ satisfy the corresponding condition $i$ of Definition \ref{pd}.\vspace{5pt}   

{\bf Procedure Purification} applied to a cirquent $D$:  
Starting from Stage 1, each  of the following 7 stages is a loop that should be iterated until it no longer modifies (the current value of) $D$; then the procedure goes to the next stage, unless the current stage was Stage 7, in which case the procedure returns (the then-current value of) $D$ and terminates.  

{\em Stage 1}:   If $D$ has a surface occurrence of the form $\bot\mld A$ or  $A\mld\bot$, change the latter to $A$  
using Identity(a) perhaps in combination with Commutativity(a).  Next, if $D$ has a surface occurrence of the form $\bot\mlc A$  or $A\mlc\bot$, change it to $\bot$  using Domination(b)   perhaps in combination with Commutativity(b).

{\em Stage 2}: If $D$ has a surface occurrence of the form $(A\mlc B)\mld C$ or $C\mld(A\mlc B)$, change it to $(A\mld C)\mlc (B\mld C)$ using Distributivity(a) perhaps in combination with Commutativity(a). 

{\em Stage 3}: If $D$ has a surface occurrence of the form  $(A\adc^c B)\mld C$ or $C\mld(A\adc^c B)$, change it to $(A\mld C)\adc^c (B \mld C)$ using Distributivity(b) perhaps in combination with Commutativity(a). 

%{\em Stage 4}: If $D$ has a surface subcirquent of the form $(A\mld B)\mld C$, change that subcirquent to $A\mld(B\mld C)$ using Associativity(a). 

{\em Stage 4}: If $D$ has a surface occurrence of the form $A_1\mld \ldots\mld A_n$ and, for some elementary letter $p$, both 
$p$ and $\neg p$ are among $A_1,\ldots,A_n$, change $A_1\mld \ldots\mld A_n$ to $\top$ using Trivialization, perhaps in combination with Domination(a), Commutativity(a) and Associativity(a).

{\em Stage 5}: If $D$ has a surface occurrence of the form $\top\mld A$ or $A\mld \top$, change it to  $\top$  using Domination(a)   perhaps in combination with Commutativity(a). Next, if $D$ has a surface occurrence of the form
$\top \mlc  A$ or $A\mlc\top$,  change it to $A$  
using Identity(b) perhaps in combination with Commutativity(b).  

{\em Stage 6}: If $D$ has a surface occurrence of the form $(A \adc^a B)\mlc (E \adc^b F)$, change it to 
$\Bigl(\bigl(A\mlc (E \adc^b F)\bigr)\adc^a \bigl(B\mlc (E \adc^b F)\bigr)\Bigr) 
\adc^c \Bigl(\bigl((A\adc^a B)\mlc E\bigr)\adc^b\bigl((A\adc^a B)\mlc D\bigr)\Bigr)$ using Quadrilemma. 

%{\em Stage 8}: If $D$ has a surface subcirquent of the  form $(A\mlc B)\mlc C$, change that subcirquent to $A\mlc (B\mlc C)$ using Associativity. 

{\em Stage 7}: If $D$ is of the form $X[E\adc^c F]\adc^c A$ (resp. $A\adc^c X[E\adc^c F]$), change it to $X[E]\adc^cA$ (resp. $A\adc^c X[F]$) using Cleansing.

\begin{lemma}\label{terminate}  
%\marginpar{terminate}
Each stage of the Purification procedure strictly reduces the rank of $D$.  
\end{lemma}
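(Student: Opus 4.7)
The plan is to handle each of the seven stages by showing that the rewrite it performs on $D$ replaces some subcirquent by one of strictly smaller rank; Lemma \ref{monot} then immediately yields that $\overline{D}$ strictly decreases. The proof thereby reduces to seven small rank comparisons between the replaced (sub)cirquent and its replacement.

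For Stages 1 and 5 the comparisons are immediate from the definition: $\overline{\bot\mld A}={}^{1+\overline{A}}5>\overline{A}$, $\overline{\bot\mlc A}=5^{1+\overline{A}}>1$, and symmetrically for the $\top$ cases. Stage 4 replaces an $\mld$-combination $A_1\mld\ldots\mld A_n$ ($n\geq 2$) of rank at least ${}^{2}5=3125$ by $\top$ of rank $1$. Stage 7 replaces the subcirquent $E\adc^c F$ inside $D$ by $E$ or $F$, which shrinks its rank since $\overline{E},\overline{F}<\overline{E}+\overline{F}=\overline{E\adc^c F}$. For Stage 3, writing $a,b,c$ for $\overline{A},\overline{B},\overline{C}$, the needed inequality is ${}^{a+b+c}5>{}^{a+c}5+{}^{b+c}5$: each summand on the right is at most ${}^{a+b+c-1}5$ (using $a,b\geq 1$ and monotonicity of tetration), so the sum is at most $2\cdot{}^{a+b+c-1}5<5^{{}^{a+b+c-1}5}={}^{a+b+c}5$. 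For Stage 6, writing also $e,f$, one must show $5^{a+b+e+f}>5^{a+e+f}+5^{b+e+f}+5^{a+b+e}+5^{a+b+f}$; each summand is at most $5^{a+b+e+f-1}$ (one of the variables is missing from its exponent and all variables are $\geq 1$), so the right side is at most $4\cdot 5^{a+b+e+f-1}<5^{a+b+e+f}$.

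The main obstacle is Stage 2, whose distributivity rewrite $(A\mlc B)\mld C\to(A\mld C)\mlc(B\mld C)$ duplicates $C$ --- precisely the sort of move that would defeat a merely exponential rank function. The target inequality is ${}^{5^{a+b}+c}5>5^{{}^{a+c}5+{}^{b+c}5}$. My plan is to bound the right-hand side in two steps: first, the exponent satisfies ${}^{a+c}5+{}^{b+c}5\leq 2\cdot{}^{a+b+c}5$, so the right side is $\leq 5^{2\cdot{}^{a+b+c}5}$; second, using $5^{2y}<5^{5^y}$ for $y\geq 1$ with $y={}^{a+b+c}5$, this is $<5^{{}^{a+b+c+1}5}={}^{a+b+c+2}5$. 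On the left, $5^{a+b}\geq 25\geq a+b+2$ since $a,b\geq 1$, so $5^{a+b}+c\geq a+b+c+2$, and monotonicity of tetration gives ${}^{5^{a+b}+c}5\geq{}^{a+b+c+2}5$, beating the right. This is the one place where the choice of tetration (rather than plain exponentiation) in the definition of $\overline{A\mld B}$ is indispensable, and fixing the constants for this step is the only nontrivial part of the proof.
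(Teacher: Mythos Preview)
Your approach is essentially the same as the paper's: reduce to a local rank comparison via Lemma~\ref{monot} and then handle the seven stages one by one, with Stage~2 being the only case requiring real work. The paper's Stage~2 bound is marginally tighter (it pivots through ${}^{a+b+c+1}5$ rather than your ${}^{a+b+c+2}5$, using ${}^{a+c}5+{}^{b+c}5\leq{}^{a+b+c}5$---which is exactly your Stage~3 inequality---and $a+b+1<5^{a+b}$), but the idea is the same.

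One slip to fix: in your Stage~2 argument you write ``$5^{a+b}\geq 25\geq a+b+2$ since $a,b\geq 1$'', but the second inequality $25\geq a+b+2$ fails once $a+b>23$. What you actually need (and what holds) is the direct bound $5^{n}\geq n+2$ for all $n\geq 2$, which is immediate by induction; just drop the intermediate ``$25$''.
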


\begin{proof} Each stage replaces an occurrence of a subcirquent $A$ of $D$ by some cirquent $B$. In view of Lemma \ref{monot}, in order to show that such a replacement reduces the rank $\overline{D}$ of $D$, it is sufficient to show that $\overline{B}<\overline{A}$.   Keep in mind that the rank of a cirquent is always at least $1$.

{\em Stage 1}: Each iteration of this stage replaces in $D$ an occurrence of $\bot\mld A$,  $A\mld\bot$,  $\bot\mlc A$ or $A\mlc\bot$  by $A$ or $\bot$. Of course, both $\overline{A}$ and $\overline{\bot}$ are smaller than $\overline{\bot\mld A}$,  $\overline{A\mld\bot}$,  $\overline{\bot\mlc A}$ and $\overline{A\mlc\bot}$. 

{\em Stage 2}: Each iteration of this stage replaces  in $D$  an occurrence of  $(A\mlc B)\mld C$ or $C\mld(A\mlc B)$ by  $(A\mld C)\mlc (B\mld C)$. 
%Let $a$, $b$ and $c$ be the ranks of $A$, $B$ and $C$, respectively. 
$\overline{(A\mlc B)\mld C}$ (or $\overline{C\mld(A\mlc B)}$) is
$ ^{[5^{(\overline{A}+\overline{B})}+\overline{C}]}5$ and $\overline{(A\mld C)\mlc (B\mld C)}$ is  $5^{[^{(\overline{A}+\overline{C})}5+^{(\overline{B}+\overline{C})}5]}$. 
We want to show that $5^{[^{(\overline{A}+\overline{C})}5+^{(\overline{B}+\overline{C})}5]} < \hspace{2pt}^{[5^{\overline{A}+\overline{B}}+\overline{C}]}\hspace{-1pt}5$.  
We of course have $\overline{A}+\overline{B}+1< 5^{\overline{A}+\overline{B}}$, whence $\overline{A}+\overline{B}+\overline{C}+1< 
5^{\overline{A}+\overline{B}}+\overline{C}$, whence 
$^{[\overline{A}+\overline{B}+\overline{C}+1]}5
<^{[5^{\overline{A}+\overline{B}}+\overline{C}]}5$. We also have

\[5^{[^{(\overline{A}+\overline{C})}5+^{(\overline{B}+\overline{C})}5]}=  5^{[^{(\overline{A}+\overline{C})}5]}\times 5^{[^{(\overline{B}+\overline{C})}5]}= ^{(\overline{A}+\overline{C}+1)}5\times ^{(\overline{B}+\overline{C}+1)}5 \leq ^{[\overline{A}+\overline{B}+\overline{C}+1]}5.\] 
 Consequently, $5^{[^{(\overline{A}+\overline{C})}5+^{(\overline{B}+\overline{C})}5]}<^{[5^{\overline{A}+\overline{B}}+\overline{C}]}5$, as desired.

{\em Stage 3}:  $\overline{(A\adc^c B)\mld C}$ (or $\overline{C\mld(A\adc^c B)}$ is $^{(\overline{A}+\overline{B}+\overline{C})}5$, and   $\overline{(A\mld C)\adc^c (B \mld C)}$ is $^{(\overline{A}+\overline{C})}5+^{(\overline{B}+\overline{C})}5$. Taking into account that ranks are always positive, we obviously have  $^{(\overline{A}+\overline{C})}5+^{(\overline{B}+\overline{C})}5<
^{(\overline{A}+\overline{B}+\overline{C})}5$. 

{\em Stage 4}: $\overline{\top}=1<\overline{A_1\mld \ldots\mld A_n}$.

{\em Stage 5}: Similar to Stage 1. 

{\em Stage 6}: $\overline{(A \adc^a B)\mlc (E \adc^b F)}$ is $5^{[\overline{A}+\overline{B}+\overline{E}+\overline{F}]}$, and    
\[\overline{\Bigl(\bigl(A\mlc (E \adc^b F)\bigr)\adc^a \bigl(B\mlc (E \adc^b F)\bigr)\Bigr) 
\adc^c \Bigl(\bigl((A\adc^a B)\mlc E\bigr)\adc^b\bigl((A\adc^a B)\mlc D\bigr)\Bigr)}\]
is $5^{(\overline{A}+\overline{E}+\overline{F})}+5^{(\overline{B}+\overline{E}+\overline{F})}+5^{(\overline{A}+\overline{B}+\overline{E})}+ 5^{(\overline{A}+\overline{B}+\overline{F})}$. Obviously the latter is smaller than the former.

{\em Stage 7}: Each iteration of this stage replaces a subcirquent $E\adc^c F$ by $E$ (resp. $F$). The rank $\overline{E}+\overline{F}$ of $E\adc^c F$ is greater than the rank $\overline{E}$ of $E$ (resp. the rank $\overline{F}$ of $F$).   
 \end{proof}

Where $A$ is the initial value of $D$ in the Purification procedure and $B$ is its final value (which exists by Lemma \ref{terminate}), we call $B$ the {\bf purification} of $A$.

\begin{lemma}\label{pl}
%\marginpar{pl}
 For any cirquent $A$ and its purification $B$, we have: 

1. If $B$ is provable, then so is $A$. 

2. $A$ is valid iff so is $B$. 

3. $B$ is pure. 

4. The rank of $B$ does not exceed the rank of $A$.  
\end{lemma}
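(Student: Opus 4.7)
The plan is to prove the four clauses largely separately, exploiting that each modification performed by Purification applies a specific single-premise rule of $\lll$ with the ``new'' value of $D$ as premise and the ``old'' value as conclusion.

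Clauses 1, 2 and 4 follow quickly from this remark. Writing the sequence of modifications as $A = D_0, D_1, \ldots, D_k = B$, we have that each $D_{i-1}$ follows from $D_i$ by a rule of $\lll$; so extending any proof of $B$ by the applications producing $D_{k-1}, D_{k-2}, \ldots, D_0 = A$ gives a proof of $A$, establishing clause 1. For clause 2, Corollary \ref{prescor} says that every rule of $\lll$ preserves validity premises-to-conclusion, and every rule other than Choosing preserves it conclusion-to-premises. Since Purification never invokes Choosing, each step preserves validity in both directions, making $A$ and $B$ validity-equivalent. Clause 4 is immediate from Lemma \ref{terminate}: every individual modification strictly reduces the rank, so $\overline{D}$ is monotonically non-increasing over the procedure, giving $\overline{B} \leq \overline{A}$.

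The substance lies in clause 3. I will prove by induction on $i \in \{1,\ldots,7\}$ that after Stage $i$ terminates the current value of $D$ satisfies Conditions $1, \ldots, i$ of Definition \ref{pd}. That Stage $i$ \emph{establishes} Condition $i$ at its fixed point is essentially by construction, since each loop keeps rewriting as long as the negation of the intended condition is visibly present at the surface. The real content, and the main obstacle, is showing that Stages $i+1, \ldots, 7$ do not disturb Conditions $1, \ldots, i$.

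My plan for this preservation argument is a case-by-case check, rule by rule, anchored by the following key structural observation: after Stage 2 no surface $\mlc$ has any $\mld$-ancestor, and none of Stages 3--7 ever introduces a new surface $\mlc$. This immediately rules out the worrying scenario in which, for instance, Stage 5's rewrite $\top \mlc A \leadsto A$ could uncover the $\mld$-root of $A$ beneath a surrounding surface $\mld$ and thereby manufacture a surface $\mld$-sequence containing both $p$ and $\neg p$ — by the observation, no such enclosing $\mld$-context is present at the affected position, so the $\mld$-sequences inside $A$ remain exactly as they were. An analogous argument, together with the fact that the $(A\adc^a B)\mlc(E\adc^b F)$ rewritten in Stage 6 has no $\mld$-ancestor at the surface, shows that the quadrilemma's fresh $\adc^c$ cannot violate Condition 3, and that all its internal connectives, being in the scope of that $\adc^c$, are invisible at the surface and hence irrelevant to Conditions 1--5. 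To confirm that Stage 6 itself achieves Condition 6, a short counting argument on the binary tree of surface $\mlc$-nodes above the top-level conjuncts $A_1, \ldots, A_n$ shows that if every $A_i$ were $\adc$-rooted then some internal $\mlc$-node would have two $\adc$-rooted leaf children, contradicting Stage 6's termination. Finally, Stage 7's effect is entirely confined to the scope of the root $\adc^c$, so it leaves the surface structure (hence Conditions 1--6) untouched while establishing Condition 7.
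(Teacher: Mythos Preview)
Your proposal is correct and follows essentially the same approach as the paper. Clauses 1, 2, and 4 match the paper's arguments exactly, and for Clause 3 the paper simply asserts that a ``routine examination'' shows each Stage $i$ establishes Condition $i$ while later stages preserve it, whereas you actually carry out that examination---your key observation about surface $\mlc$'s having no $\mld$-ancestor after Stage 2 and the leaf-pair argument for Stage 6 are sound elaborations of what the paper leaves implicit.
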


\begin{proof} {\em Clause 1}: When obtaining $B$ from $A$, each transformation performed during the Purification procedure applies, in the conclusion-to-premise sense, one of the inference rules of $\lll$. Reversing the order of those transformations, we get a derivation of $A$ from $B$. Appending that derivation to a proof of $B$ 
(if one exists) yields a proof of $A$. 

{\em Clause 2}: Immediate from the two clauses of Lemma \ref{prescor} and the fact that, when obtaining $B$ from $A$ 
using the Purification procedure, the rule of Choosing is never used. 

{\em Clause 3}: One by one, Stage 1 eliminates all surface occurrences of $\bot$ in $D$ (unless $D$ itself is $\bot$). So, at the end of the stage, $D$ satisfies condition 1 of Definition \ref{pd}. None of the subsequent steps make $D$ violate that condition, so $B$, too, satisfies that condition.
 Similarly, a routine examination of the situation reveals that Stage 2 (resp. 3, \ldots, resp. 7) of the Purification procedure makes $D$ satisfy condition 2 (resp. 3, \ldots, resp. 7) of Definition \ref{pd}, and $D$ continues to satisfy that condition throughout the rest of the stages. So, $B$ is pure.   

{\em Case 4}:  Immediate from Lemma  \ref{terminate}. \end{proof}

\begin{theorem}\label{theo}
%\marginpar{theo}
A cirquent is valid if (soundness) and only if (completeness) it is provable in \lll.
\end{theorem}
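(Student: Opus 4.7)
Soundness is immediate from Corollary~\ref{prescor}(1): $\top$ is valid and every rule preserves validity from premises to conclusion, so a straightforward induction on proof length finishes.

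For completeness, I plan to use strong induction on the rank $\overline{A}$. Given a valid $A$, Lemma~\ref{pl} lets me pass to its purification $B$, which is pure, valid, of rank $\overline{B}\le\overline{A}$, and whose provability implies that of $A$. So it is enough to treat the pure valid case, and I case-analyze on the root of $B$:
\begin{itemize}
\item $B=\top$: the axiom.
\item $B$ is $\bot$ or a literal: any interpretation making $B$ false contradicts validity.
\item $B=E\adc^c F$: purity condition~7 says $c$ is absent from $E$ and $F$, and the semantics of $\adc^c$ forces both $E$ and $F$ to be valid; both have strictly smaller rank, so the induction hypothesis combined with Splitting yields a proof of $B$.
\item $B$ has a surface $\add^c$-rooted subcirquent: I plan to invoke the semantic choice lemma (below) to find a cluster $c$ and direction $i\in\{0,1\}$ such that the ``committed'' cirquent $B'$---obtained by replacing each cluster-$c$ $\add^c$-subcirquent of $B$ by its $i$-th operand---is valid. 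Since $\overline{B'}<\overline{B}$, the induction hypothesis plus Choosing$(a)$ or Choosing$(b)$ proves $B$.
\item Otherwise: $B$ is not $\adc^c$-rooted and contains no surface $\add^c$-subcirquent. If $B$ is $\mld$-rooted, purity conditions 1, 2, 3, 5 force $B$ to be a disjunction of literals, and condition~4 furnishes a falsifying interpretation. If $B$ is $\mlc$-rooted, write $B=A_1\mlc\ldots\mlc A_n$; condition~6 supplies some $A_i$ that is not $\adc^c$-rooted, and the same reasoning shows this $A_i$ is a literal or a disjunction of literals, which in either case can be falsified by an interpretation, thus falsifying $B$. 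Validity is contradicted in both subcases.
\end{itemize}

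The main obstacle will be proving the semantic choice lemma. My plan: take a logical solution $\mathcal{M}$ of $B$ and examine its deterministic behavior against a silent environment. Either $\mathcal{M}$ eventually plays a move $c.i$ at some cluster $c$ with a surface $\add^c$-subcirquent in $B$---in which case $\mathcal{M}$ with this first commitment absorbed into its starting state is a logical solution of the corresponding $B'$, using the static-game manipulations of Section~\ref{sgames}---or else $\mathcal{M}$ never moves in any such cluster, whence its success does not depend on the resolution of any such $c$, so arbitrarily committing any one of them keeps the cirquent valid, with $\mathcal{M}$ itself still serving as solution. Either way, a valid $B'$ of strictly smaller rank exists, completing the induction.
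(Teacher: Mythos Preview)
Your overall plan matches the paper's proof: soundness via Corollary~\ref{prescor}(1), completeness by induction on rank after passing to the purification $B$, followed by a case analysis on the shape of the pure valid $B$. The paper organizes the cases a bit differently---it treats the $\add$-rooted, $\mld$-rooted, and $\mlc$-rooted situations separately rather than lumping all ``has a surface $\add$-subcirquent'' cases into a single lemma---but the underlying ideas coincide.

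There is, however, a real gap in your semantic choice lemma. The fallback branch (``$\mathcal{M}$ never moves in any such cluster, whence its success does not depend on the resolution of any such $c$, so \ldots\ $\mathcal{M}$ itself still serving as solution'') is not justified. You only examined $\mathcal{M}$'s behavior against a \emph{silent} environment; against a non-silent one $\mathcal{M}$ may well respond to some $\bot$-move by playing $c.1$, and then $\mathcal{M}$ need not win the $B'$ obtained by committing $c$ to the left. The correct fix is not to repair the fallback but to observe that it never arises: for a \emph{pure} valid $B$ containing a surface $\add$-subcirquent, any logical solution must eventually move in some surface $\add$-cluster against a silent opponent. Establishing this requires exactly the structural analysis you already carry out in your ``Otherwise'' case (purity plus the absence of such a move leaves a conjunct or the whole $\mld$-disjunction falsifiable under a suitable interpretation), and this is precisely how the paper handles its Cases~2, 4, and~5: in each structural subcase it argues directly that the machine \emph{must} make such a move, rather than entertaining the possibility that it does not.
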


\begin{proof} The soundness part is immediate from clause 1 of Lemma \ref{prescor} and the fact that the axiom $\top$ is valid. The rest of this section is devoted to a proof of the completeness part. Pick an arbitrary cirquent $A$ and assume it is valid. We proceed by induction on the 
rank of $A$. Let $B$ be the purification of $A$.

 In view of clauses 2-4 of Lemma \ref{pl}, $B$ is a valid, pure  cirquent whose rank does not exceed that of 
$A$. We shall implicitly rely on this fact below. By clause 1 of Lemma \ref{pl}, if $B$ is provable, then so is $A$. Hence, in order to show that $A$ is provable, it suffices to show that $B$ is provable. $B$ cannot be $\bot$ because then, of course, it would not be valid. Similarly, $B$ 
cannot be a literal because obviously no literal is valid. In view of this observation and $B$'s being pure,  it is clear that  the following  cases 
cover all possibilities for $B$. 

{\em Case 1}:  $B$ is $\top$. Then $B$ is an axiom and hence provable. 

{\em Case 2}: $B$ is $E \add^c  F$. Let $\cal H$ be a logical solution of $B$. Consider the work of $\cal H$ in the scenario where the environment does not move until $\cal H$ makes the move $c.i$, where $i\in\{0,1\}$. Sooner or later $\cal H$ has to make such a move, for otherwise $B$ would be lost due to being $\add^c$-rooted.  Since in the games that we deal with the order of moves is irrelevant, without loss of  generality we may assume that the move $c.i$ is made before any other moves. Let $B'$ be the result of replacing in $B$ all subcirquents of the form $X_0\add^c X_1$ by $X_i$.  Observe that, after the move $c.i$ is made, in any scenario that may follow, $\cal H$ has to continue and win $B'$. In other words,  $\cal H$ is a logical solution of (not only $B$ but also) $B'$. The rank of $B'$ is of course smaller than that of $B$. Hence, by the induction hypothesis, $B'$  is provable. Then $B$ follows from $B'$ by Choosing. 

{\em Case 3}: $B$ is $E \adc^c F$, and neither $E$ nor $F$ contains the cluster $c$. By clause 2 of Lemma \ref{prescor}, both $E$ and $F$ are valid, because $B$ follows from them by Splitting. The rank of either cirquent is smaller  than that of $B$. Hence, by the induction hypothesis, both $E$ and $F$ are provable. Therefore, by Splitting, so is $B$.

{\em Case 4}: $B$ is $E_1\mld\ldots\mld E_n$ ($n\geq 2$), where each $E_i$ is either a literal of a cirquent of the form $F \add^c G$; besides, for no elementary letter $p$ do we have that both $p$ and $\neg p$ are among $E_1,\ldots,E_n$. Not all of the cirquents $E_1,\ldots,E_n$ can be literals, for otherwise $B$ would be automatically lost under an interpretation which interprets all those literals as $\bot$, contradicting our assumption that $B$  is valid. With this observation in mind, without loss of generality, we may assume that, for some 
$k$ with $1\leq k \leq n$, the first $k$ cirquents $E_1,\ldots,E_k$ are of the form $F_1\add^{c_1} G_1$, \ldots, 
$F_k\add^{c_k} G_k$ and the remaining $n-k$ cirquents $E_{k+1},\ldots,E_n$ are literals. Let $\cal H$ be a 
logical solution of $B$. Consider the work of $\cal H$ in the scenario where the environment makes no moves. 
Note that, at some point, for some $1 \leq j \leq k$, $\cal H$ should make the move $c_j.i$ ($i\in\{0.1\}$), 
for otherwise $B$ 
would be lost under an(y) interpretation which interprets all of the literal cirquents $E_{k+1},\ldots,E_n$ as $\bot$. Fix such  $j,i$. Let $B'$ be the result of replacing, in $B$, every subcirquent of the form $X_0 \add^{c_j} X_1$  by $X_i$.  With some analysis left to the reader, $\cal H$ can be seen to be a logical solution of $B'$. Thus, $B'$ is valid. The rank of $B'$ is smaller than that of $B$ and hence, by the induction hypothesis, $B'$ is provable. But then so is $B$, because it follows from $B'$ by Choosing. 

{\em Case 5}: $B$ is $E_1\mlc\ldots\mlc E_n$ ($n\geq 2$), where, for some $e$ ($1\leq e \leq n$), $E_e$ --- fix it --- is not of the form $F\adc^c G$ or $F\mlc G$, nor do we have $E_e\in\{\top,\bot\}$. The validity of $B$, of course, implies that 
$E_e$, as one of its $\mlc$-conjuncts, is also valid. This rules out the possibility that $E_e$ is a literal, because, as we observed earlier, a literal cannot be valid. We are therefore left with one of the following two possible subcases:

{\em Subcase 5.1}: $E_e$ is of the form $F\add^c G$. Let $\cal H$ be a logical solution of $B$. As in Case 4, consider the work of $\cal H$ in the scenario where the environment makes no moves. Note that, at some point, $\cal H$ should make the move $c.0$ or $c.1$, for otherwise $B$ would be lost (under any interpretation). Let us just consider the case of the above move being $c.0$ (the case of it being $c.1$ will be handled in a similar way). Let $B_0$ be the result of replacing, in $B$, every subcirquent of the form $X \add^c Y$ 
(including the conjunct $F\add^c G$) by $X$. Then, as in  Case 4, $\cal H$ can be seen to be a logical solution of $B_0$. Thus, $B_0$ is valid. The rank of $B_0$ is smaller than that of $B$ and hence, by the induction hypothesis, $B_0$ is provable. But then so is $B$, because it follows from $B_0$ by Choosing(a). 

{\em Subcase 5.2}: $E_e$ is of the form $F_1\mld\ldots\mld F_m$, where each $F_i$ ($1\leq i\leq m$) is either a literal or a cirquent of the form $G\add^c H$, and for no elementary letter $p$ do we have that both $p$ and $\neg p$ are among $F_1,\ldots,F_m$. This 
case is very similar to Case 4 and, almost literally repeating our reasoning in the latter, we find that $B$ 
is provable. \end{proof}

\end{document}